\documentclass[pra,superscriptaddress,twocolumn,longbibliography,reprint]{revtex4-2}
\usepackage{amsmath,amsfonts,amssymb,graphics,graphicx,epsfig,color,times,natbib,amsthm,bm,makecell,mathrsfs}
\usepackage{booktabs}
\usepackage{dcolumn}
\usepackage{dsfont}
\usepackage{subfigure}
\usepackage{hyperref}
\usepackage{appendix}
\usepackage[english]{babel}
\usepackage{color}
\theoremstyle{plain}
\newtheorem{theorem}{Theorem}
\newtheorem{lemma}{Lemma}
\usepackage[ruled]{algorithm2e}

\hypersetup{colorlinks=true, citecolor=blue, urlcolor=blue, linkcolor=blue}

\begin{document}
	\date{\today}
	\title{SOS decomposition for general Bell inequalities  in two qubits systems and its application to quantum randomness}
	\author{Wen-Na Zhao}
	\affiliation{College of Science, China University of Petroleum, 266580 Qingdao, P.R. China.}
	\author{Youwang Xiao}
	\affiliation{College of Science, China University of Petroleum, 266580 Qingdao, P.R. China.}
	\author{Ming Li}\email{liming@upc.edu.cn.}
	\affiliation{College of Science, China University of Petroleum, 266580 Qingdao, P.R. China.}
	\author{Li Xu}
	\affiliation{College of Science, China University of Petroleum, 266580 Qingdao, P.R. China.}
	\author{Shao-Ming Fei}
	\affiliation{School of Mathematical Science, Capital Normal University, 100048, Beijing, China}
	\begin{abstract}
		Bell non-locality is closely related with device independent quantum randomness. In this paper, we present a kind of sum-of-squares (SOS) decomposition for general Bell inequalities in two qubits systems. By using the obtained SOS decomposition, we can then find the measurement operators associated  with the maximal violation of considered Bell inequality. We also practice the SOS decomposition method by considering the (generalized) Clauser-Horne-Shimony-Holt (CHSH) Bell inequality, the Elegant Bell inequality, the Gisin inequality and the Chained Bell inequality as examples. The corresponding SOS decompositions and the measurement operators that cause the maximum violation values of these Bell inequalities are derived, which are consistent with previous results.
We further discuss the device independent quantum randomness by using the SOS decompositions of Bell inequalities. We take the generalized CHSH inequality with the maximally entangled state and the Werner state that attaining the maximal violations as examples. Exact value or lower bound on the maximal guessing probability using the SOS decomposition are obtained. For Werner state, the lower bound can supply a much precise estimation of quantum randomness when $p$ tends to $1$.
	\end{abstract}
	\maketitle
	\section{introduction}
	Two of the most remarkable features of quantum theory are its intrinsic randomness and its non-local character.
	Local measurements of composite quantum systems lead to correlations that are incompatible with local hidden variable theory. This phenomenon is known as quantum non-locality and has been recognized as an important resource for quantum information tasks \cite{ref1}, such as quantum key distribution \cite{ref2}, communication complexity \cite{ref3}, randomness generation \cite{ref4}. In \cite{ref4111,ref4222}, they study the quantum correlations in the minimal scenario and structure of the set of quantum correlators using semidefinite programming.
	It is clearly demonstrated by the fact that measurements on quantum states may violate the so-called Bell inequalities \cite{ref5}.
	Randomness is an essential resource \cite{ref6,ref7,ref8,ref10,ref11,ref12} for various information processing tasks. However, the existence and widespread use of two random number generators such as Pseudo-RNG and True-RNG does not guarantee that the random numbers we generate are truly random \cite{ref13}. In fact, random numbers generated based on any classical process do not inherently contain true randomness, and the randomness we see is only due to the observer's incomplete understanding of the overall mechanism of the system. Mathematically it is just a probabilistic combination of some deterministic events. In classical theory, according to Newton's theorem, all physical processes are deterministic, while Bell criterion indicates that quantum theory contains endowed randomness. In one word the randomness in classical physical systems is apparent randomness, while the randomness generated in quantum theory is intrinsic randomness. This randomness persists even if we have the full knowledge of the preparation of the state of the system. Thus, such randomness does not rely on the lack of knowledge about the systems or complexity of the calculations \cite{ref14}.
	
We consider two qubits systems $H_A\otimes H_B$. Every Bell inequality corresponds to a Bell operator
\begin{gather}\label{gb}
\mathscr{B}= \sum_{x=1}^{n}\sum_{y=1}^{m}\alpha_{xy}A_x\otimes B_y,
\end{gather}
such that the violation is obtained as $\beta=tr[\mathscr{B}\rho]$. If the maximal violation achievable by using quantum resources (ie.the quantum bound) is $\beta_{\max}$, the shifted Bell operator is defined as $\gamma=\beta_{\max}\mathbb{I} - \mathscr{B}$, where $\mathbb{I}=I_A\otimes I_B, I_A$ and $I_B$ stands for the identity operators on the two subspace $H_A$ and $H_B$ respectively. Every shifted Bell operator is by construction positive semi-definite since $\langle \psi | \mathscr{B} | \psi \rangle \le \beta_{\max} $ for all $|\psi \rangle$. Imagine the shifted Bell operator admits a decomposition
	\begin{gather}\label{N1}
		\beta_{\max}\mathbb{I} - \mathscr{B} =\sum_{\lambda} P_{\lambda}^{\dagger} P_{\lambda},
	\end{gather}
	where each $ P_{\lambda} $ is a polynomial in the operators $ A_x$ and $B_y$. The decomposition of Eq.\eqref{N1} is called the sum-of-squares (SOS) decomposition of the shifted Bell operator and can be defined in a completely general way for other Bell operators \cite{ref15}. Significantly, SOS decompositions allow one to extract potentially useful information about the physical state and measurements used to achieve the maximal violation of the corresponding Bell inequality \cite{ref16,ref17,ref18,ref19,ref20}. Note that it is much difficult to find a SOS decomposition of any Bell operator, especially one that imposes strict constraints on maximal quantum violation.
	Once found, verifying that Eq.\eqref{N1} holds and that
	$ \langle \mathscr{B} \rangle \le \beta_{\max} $ usually involves only a few simple computations. That is, an SOS provides a simple certificate that $ \langle \mathscr{B} \rangle \le \beta_{\max} $.
	Furthermore, the search for optimal SOS can be cast as a series
	of semi-definite programs (SDP) that turns out to be simply the dual formulation \cite{ref21} of the SDP hierarchy introduced in \cite{ref22,ref23}. In \cite{ref16}, the authors have introduced two families of SOS decompositions for the Bell operators associated with the tilted Clauser-Horne-Shimony-Holt (CHSH) expressions.
	Finally, as shown in \cite{ref24}, an optimal SOS, i.e., one for which
	$ \langle \mathscr{B} \rangle \le \beta_{\max} $ is a tight bound, provides useful information about the optimal quantum strategy and can find an application in robust self-testing. In \cite{ref25}, the authors have presented a nice method by which an SOS decomposition of a type of Bell inequalities are derived.
	
	In this paper, we present a kind of SOS decomposition form for general Bell inequalities and derive the measurement operators associated with the maximal violations of considered Bell inequalities. We also practice the SOS decomposition method by considering the (generalized) CHSH Bell inequality, the Elegant Bell inequality, the Gisin inequality and the Chained Bell inequality as examples. The corresponding SOS decompositions and the measurement operators that bring about the maximum violation values of these Bell inequalities are derived, which are consistent with previous results. Compared with the result in \cite{ref16},
	the present SOS decomposition forms are more concise and applicable to general Bell inequalities.
	The SOS decomposition of general Bell inequalities is then applied to the calculation of the device-independent(DI) quantum randomness. We take the generalized CHSH Bell inequality together with the maximal entangled state and Werner state as examples. For the maximal entangled state we derive the exact value of the minimal entropy. While for Werner state, we analytically derive a lower bound on the maximal guessing probability by CHSH Bell inequality with the SOS decomposition method. Combining the SDP method in \cite{ref22,ref23}, we show that the lower bound can supply a much precise estimation of quantum randomness when $p$ tends to $1$.

	\section{SOS decomposition for general Bell inequalities and the optimal measurements}
	In this section, we present a kind of SOS decomposition for general Bell inequalities. Then by using the obtained SOS decomposition, we find the optimal measurement operators associated with the maximal violation of considered Bell inequality.

	\begin{theorem}\label{thm1}
Let $|\psi\rangle $ be a quantum  state and $\beta$ be a positive number. For any Bell operator in the form of (\ref{gb}), we define
\begin{align}\label{A2}
\omega_x=||\sum_{y=1}^{m}\alpha_{xy}I_A\otimes B_y|\psi\rangle||,
\end{align}
where$||\mathcal{O}|\psi\rangle||=\sqrt{\langle\psi|\mathcal{O}^+\mathcal{O}|\psi\rangle}.$
If $\sum_{x=1}^{n}\omega_{x}=\beta$, we can then obtain
\begin{align}\label{A1}
\langle\psi|\beta\mathcal{I}-\mathscr{B}|\psi\rangle
=\langle\psi|\sum_{x=1}^{n}\frac{\omega_x}{2}(M_x)^{\dag}M_x|\psi\rangle,		\end{align}
where
\begin{align}
M_x&=\frac{1}{\omega_x}(\sum_{y=1}^{m}\alpha_{xy}I_A\otimes B_y)-A_x\otimes I_B.
\end{align}
\end{theorem}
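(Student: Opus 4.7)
The plan is to verify the identity by expanding each $M_x^\dagger M_x$ directly, summing over $x$, and recognizing $\mathscr{B}$ in the cross term. The whole computation hinges on two structural facts that are standard in the Bell setting: (i) the local observables satisfy $A_x^2=I_A$ and $B_y^2=I_B$ with $A_x^\dagger=A_x$ and $B_y^\dagger=B_y$; (ii) any operator of the form $A_x\otimes I_B$ commutes with any operator of the form $I_A\otimes B_y$, because they act on disjoint tensor factors.

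First, I would introduce the shorthand $N_x:=\sum_{y=1}^m \alpha_{xy}\, I_A\otimes B_y$, so that by definition $\omega_x^2=\langle\psi|N_x^\dagger N_x|\psi\rangle$ and $M_x=\omega_x^{-1}N_x-A_x\otimes I_B$. Hermiticity of $B_y$ (with real $\alpha_{xy}$) gives $N_x^\dagger=N_x$, hence $\omega_x^2=\langle\psi|N_x^2|\psi\rangle$. Next I would expand
\begin{align*}
M_x^\dagger M_x &= \omega_x^{-2}N_x^2 \\
&\quad - \omega_x^{-1}\bigl[(A_x\otimes I_B)N_x+N_x(A_x\otimes I_B)\bigr] \\
&\quad + (A_x\otimes I_B)^2.
\end{align*}
The last term equals $\mathbb{I}$ since $A_x^2=I_A$, and the two middle terms coincide by the commutation observation above, collapsing to $-2\omega_x^{-1}(A_x\otimes I_B)N_x$. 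Taking the expectation in $|\psi\rangle$ and multiplying by $\omega_x/2$ yields
\begin{align*}
\tfrac{\omega_x}{2}\langle\psi|M_x^\dagger M_x|\psi\rangle
 = \omega_x - \langle\psi|(A_x\otimes I_B)N_x|\psi\rangle,
\end{align*}
where the first summand uses $\langle\psi|N_x^2|\psi\rangle=\omega_x^2$ and the third uses $\langle\psi|\mathbb{I}|\psi\rangle=1$, combining as $\tfrac12\omega_x+\tfrac12\omega_x=\omega_x$.

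Finally I would sum over $x=1,\dots,n$. On the right-hand side, $\sum_x \omega_x=\beta$ by hypothesis, and
\begin{align*}
\sum_{x=1}^{n}(A_x\otimes I_B)N_x = \sum_{x=1}^{n}\sum_{y=1}^{m}\alpha_{xy}\,A_x\otimes B_y = \mathscr{B},
\end{align*}
so the total reduces to $\beta-\langle\psi|\mathscr{B}|\psi\rangle=\langle\psi|\beta\mathbb{I}-\mathscr{B}|\psi\rangle$, which is precisely \eqref{A1}. I do not anticipate a genuine obstacle; the only point requiring care is ensuring that the two cross terms in the expansion of $M_x^\dagger M_x$ combine without an ordering issue, which is exactly what the tensor-factor commutation guarantees. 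Everything else is bookkeeping on the definition of $\omega_x$ and the linearity of $\mathscr{B}$ in the $\alpha_{xy}$.
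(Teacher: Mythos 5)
Your proof is correct and follows essentially the same route as the paper: expand $M_x^\dagger M_x$, use $A_x^2=I_A$ and $\langle\psi|N_x^2|\psi\rangle=\omega_x^2$, and recognize $\mathscr{B}$ in the summed cross term. Your explicit invocation of the tensor-factor commutation and of $N_x^\dagger=N_x$ just makes precise steps the paper leaves implicit.
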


	\begin{proof}
		For any Bell operator in the form of (\ref{gb}), by substituting Eq.\eqref{A2} into Eq.\eqref{A1}, We can obtain
		\begin{widetext}\begin{align}\label{A3} &\langle\psi|\sum_{x=1}^{n}\frac{\omega_x}{2}(M_x)^{\dag}M_x|\psi\rangle\notag\\
&=\langle\psi|\sum_{x=1}^{n}\frac{\omega_x}{2}[\frac{1}{\omega_x}(\sum_{y=1}^{m}\alpha_{xy}I_A\otimes B_y)-A_x\otimes I_B]
[\frac{1}{\omega_x}(\sum_{y=1}^{m}\alpha_{xy}I_A\otimes B_y)-A_x\otimes I_B]\notag|\psi\rangle\\ &=\langle\psi|\sum_{x=1}^{n}\frac{\omega_x}{2}[\frac{1}{\omega_x^2}(\sum_{y=1}^{m}\alpha_{xy}I_A\otimes B_y)^2
-\frac{2}{\omega_x}A_x\otimes I_B(\sum_{y=1}^{m}\alpha_{xy}I_A\otimes B_y)+\mathbb{I}]\notag|\psi\rangle\\
&=\langle\psi|-\mathscr{B}|\psi\rangle
+\langle\psi|\sum_{x=1}^{n}\frac{1}{2\omega_x}(\sum_{y=1}^{m}\alpha_{xy}I_A\otimes B_y)^2|\psi\rangle
+\langle\psi|\sum_{x=1}^{n}\frac{1}{2\omega_x}\mathbb{I}|\psi\rangle\notag\\
&=\langle\psi|-\mathscr{B}|\psi\rangle+\langle\psi|\sum_{x=1}^{n}\omega_x\mathbb{I} |\psi\rangle =\langle\psi|-\mathscr{B}+\beta \mathbb{I}|\psi\rangle.
		\end{align}\end{widetext}
		\end{proof}
	The authors of the paper \cite{ref25} use the similar approach to give a tight upper bound on the expectation value for a class of Bell operators \eqref{eq1}, taking into account the measurements of Bob anti-commuting
		\begin{gather}	\label{eq1}	\mathscr{B}_n=\sum_{y=1}^{n}(\sum_{i=1}^{2^{n-1}}(-1)^{x_{y}^{i}}A_{n,i})\otimes B_{n,y}.
		\end{gather}
	In the following, we recover the measurement operator using the given state and maximum violation values. In particular, if we set $\beta=\beta_{\max}(|\psi\rangle)$ and $|\psi\rangle$ is the quantum state that corresponds to the $\beta_{\max}(|\psi\rangle)$, the above decomposition is the SOS decomposition of Bell inequality. Below we find the measurement operators corresponding to the  maximal violation of the Bell inequalities under these conditions.

	\subsection{The measurement operators corresponding to the maximal violation for any Bell inequality.}
For a given state $|\psi\rangle $ and a Bell inequality like Eq.\eqref{gb} and the associated upper bound $\beta_{\max}(|\psi\rangle)$, we can find the measurement operators corresponding to the upper bound. By Theorem \ref{thm1}, we find that $A_x, B_y$ is the measurement operators corresponding to the upper bound whenever they make the following two conditions satisfied,
\begin{align}
			&\beta_{\max}(|\psi\rangle) =\sum_{x=1}^{n}\omega_x ,\label{con1}\\
			&M_x|\psi \rangle = 0 \label{con2}.
\end{align}
Firstly, we get $B_y$ from Eq.\eqref{con1}. Then we combine Eq.\eqref{con2} with the obtained $B_y$ to represent the corresponding $A_x$,
\begin{align}
	A_x\otimes I_B=\frac{1}{\omega_x}(\sum_{y=1}^{m}\alpha_{xy}I_A\otimes B_y).
\end{align}
 In particular, when $|\psi\rangle $ is a maximally entangled state, $A_x$ can be represented linearly by $B_y$ as follows.
\begin{lemma}\label{lem1}
	For the maximum entangled state $ |\psi \rangle = \frac{1}{\sqrt{n}}\sum_{i=1}^{n}(| i \rangle_A |i \rangle_B)$ and a given symmetric matrix M with $M^{\mathrm{T}} =M$, we can obtain
	\begin{align}
	(I \otimes M -M \otimes I) |\psi \rangle =0.
	\end{align}
\end{lemma}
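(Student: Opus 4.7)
The plan is to verify the identity by direct computation in the computational basis, invoking the well-known ``transpose trick'' for the maximally entangled state. Specifically, I would first expand $|\psi\rangle = \frac{1}{\sqrt{n}}\sum_{i=1}^{n}|i\rangle_A|i\rangle_B$ and apply each of the operators $I\otimes M$ and $M\otimes I$ to it, writing the results as double sums in the basis $\{|i\rangle\otimes|j\rangle\}$ with coefficients given by matrix elements of $M$.

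More concretely, the first step is to compute
\begin{align}
(I\otimes M)|\psi\rangle &= \frac{1}{\sqrt{n}}\sum_{i,j}M_{ji}\,|i\rangle\otimes|j\rangle,
\end{align}
and analogously
\begin{align}
(M\otimes I)|\psi\rangle &= \frac{1}{\sqrt{n}}\sum_{i,j}M_{ji}\,|j\rangle\otimes|i\rangle
= \frac{1}{\sqrt{n}}\sum_{i,j}M_{ij}\,|i\rangle\otimes|j\rangle,
\end{align}
where the last equality follows from relabeling the dummy summation indices $i \leftrightarrow j$. This is equivalent to the identity $(I\otimes M)|\psi\rangle = (M^{\mathrm{T}}\otimes I)|\psi\rangle$, which holds for any (not necessarily symmetric) operator $M$ on a maximally entangled state.

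The final step is to impose the hypothesis $M^{\mathrm{T}} = M$, i.e.\ $M_{ji}=M_{ij}$ for all $i,j$. Then the two double sums above coincide term by term, so $(I\otimes M)|\psi\rangle = (M\otimes I)|\psi\rangle$, which rearranges to the desired equation $(I\otimes M - M\otimes I)|\psi\rangle = 0$.

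There is no real obstacle in this argument; the only thing that needs a small amount of care is the index bookkeeping in the transpose step, which is what converts the action of $M$ on the second factor into an action by $M^{\mathrm{T}}$ on the first factor. Once that identity is recorded, symmetry of $M$ completes the proof immediately.
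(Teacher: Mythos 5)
Your proposal is correct and follows essentially the same route as the paper: expand both $(I\otimes M)|\psi\rangle$ and $(M\otimes I)|\psi\rangle$ in the computational basis, relabel the dummy indices, and use $M_{ij}=M_{ji}$ to see the coefficients cancel term by term. Framing the intermediate step as the transpose trick $(I\otimes M)|\psi\rangle=(M^{\mathrm{T}}\otimes I)|\psi\rangle$ is just a tidy way of packaging the same index bookkeeping the paper carries out explicitly.
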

\begin{proof}
Let $M$ be a matrix with
	\begin{align}
		M=
		\begin{bmatrix}
			m_{11} & m_{12}  & \cdots   & m_{1n}   \\
			m_{21} & m_{22}  & \cdots   & m_{2n}  \\
			\vdots & \vdots  & \ddots   & \vdots  \\
			m_{n1} & m_{n2}  & \cdots\  & m_{nn}  
		\end{bmatrix},
	\end{align}
in which $m_{ij} = m_{ji}$.
\begin{align*}
		&(I \otimes M -M \otimes I) |\psi \rangle\\
		= 	&(I \otimes M -M \otimes I) \frac{1}{\sqrt{n}}\sum_{i=1}^{n}(|i\rangle _{A}|i\rangle_{B})\\
		=&\sqrt{\frac{1}{n}}\sum_{i=1}^{n}(|i\rangle_{A}\otimes M|i\rangle_{B}-M|i\rangle_{A} \otimes |i\rangle_{B})\\
		=&\sqrt{\frac{1}{n}}(\sum_{i,x=1}^{n}m_{xi}|ix\rangle-\sum_{i,x=1}^{n}m_{xi}|xi\rangle))\\
		=&\sqrt{\frac{1}{n}}[\sum_{i,x=1}^{n}(m_{xi}-m_{ix})|ix\rangle]\\
		=&0.
\end{align*} 
From $m_{ij} = m_{ji}$, we can obtain $(I \otimes M -M \otimes I) |\psi \rangle = 0$.
\end{proof}
By Lemma \ref*{lem1}, if $|\psi\rangle$ is a maximally entangled state, we set 
\begin{align}\label{mx}
		A_x=\frac{1}{\omega_x}(\sum_{y=1}^{m}\alpha_{xy} B_y),
\end{align}
then $ 	M_x|\psi \rangle = 0$ can be obtained. In other word the measurement operators on Alice's side can be represented by the measurement operators on Bob's side. Likewise, we can also obtain the measurements on Bob's side by the measurements on Alice's side. We firstly select operators $A_x$ fulfilling $\beta_{\max}(|\psi\rangle)=\sum_{y=1}^{m}\omega_{y}$, in which
 \begin{align}\label{A111}
 \omega_{y}=||\sum_{x=1}^{n}\alpha_{xy}A_x\otimes I_B|\psi\rangle||,
\end{align}
And same as the above, if $|\psi\rangle$ is a maximally entangled state, we set 
\begin{align}\label{my}
 B_y=\frac{1}{\omega_y}\sum_{x=1}^{n}\alpha_{xy}A_x,
\end{align}
then $ 	M_y|\psi \rangle = 0$ can be obtained, where $	M_y=I_A\otimes B_y-\frac{1}{\omega_y}(\sum_{x=1}^{n}\alpha_{xy}A_x\otimes I_B)$. In this way we can find the corresponding measurement operators $A_x,B_y$.
One computes
\begin{align}\label{A6} \omega_x&=[(\alpha_{x,1}^2+...+\alpha_{x,m}^2)+\sum_{i<j}\alpha_{x,i}\alpha_{x,j}
\langle\{B_i,B_j\}\rangle]^{\frac{1}{2}},\notag\\ 
\omega_y&=[(\alpha_{y,1}^2+...+\alpha_{y,n}^2)+\sum_{i<j}\alpha_{y,i}\alpha_{y,j}
\langle\{A_i,A_j\}\rangle]^{\frac{1}{2}}.
\end{align}
where $\langle\{B_i,B_j\}\rangle=B_iB_j+B_jB_i$ is the anti-commutator.

 From Eq.\eqref{A6}, one can find if $\sum_{i<j}\alpha_{x,i}\alpha_{x,j}
 \langle\{B_i,B_j\}\rangle = 0$(more specifically, $\langle\{B_i,B_j\}\rangle = 0$ for all $i<j$), we obtain $\omega_{x}= \frac{\beta_{\max}(|\psi\rangle)}{n}=\sum_{y=1}^{m}\alpha_{xy}^2$ for any $x$. The CHSH, the generalized CHSH and the EBI Bell inequality are the examples of this case.
 If $\sum_{i<j}\alpha_{x,i}\alpha_{x,j}
 \langle\{B_i,B_j\}\rangle \neq 0 $, we can also find the corresponding measurement operator, such as the Gisin inequality and the chained Bell inequality.
 The same discussion can be made for the above case of $\omega_{y}$.
 In the next subsection, we derive the measurements corresponding to the maximal violations by considering the above mentioned Bell inequalities.

	\subsection{Examples}

	\subsubsection{CHSH-Bell inequality}
	As one of the most famous Bell operators, the CHSH-Bell operator can be expressed as follows \cite{ref241}
	\begin{gather}
		\mathscr{B}_{CHSH}=A_0\otimes B_0+A_0\otimes B_1+A_1\otimes B_0-A_1\otimes B_1 .
	\end{gather}
	The SOS decomposition and the measurement operators corresponding to the  maximal violation of CHSH-Bell inequality can be derived below based on the above subsection. One first verifies
	\begin{align}\label{L1}
		\omega_{x} = \sqrt{\sum_{y=0}^{1}\alpha_{xy}^2} = \sqrt{2}, x=0,1.
	\end{align}
	We set $|\psi\rangle=\frac{1}{\sqrt{2}}(|00\rangle+|11\rangle)$. Note that $\beta_{\max}$ for CHSH inequality is $2\sqrt{2}$. From Eq.\eqref{A6} and Eq.\eqref{L1} one gets $\sum_{i<j}\alpha_{x,i}\alpha_{x,j}
	\langle\{B_i,B_j\}\rangle = 0 $. We can set $B_0=X,\ B_1=Z$, where X and Z refer to the Pauli matrix. One checks
	\begin{align*}
		\omega_0=||I_A\otimes(B_0+B_1)|\psi\rangle||=||I_A\otimes(X+Z)|\psi\rangle||=\sqrt{2} ,\\
		\omega_1=||I_A\otimes(B_0-B_1)|\psi\rangle||=||I_A\otimes(X-Z)|\psi\rangle||=\sqrt{2} .
	\end{align*}
By (\ref{mx}) we have
	\begin{gather}
	A_0=\frac{X+Z}{\sqrt{2}},A_1 = \frac{X-Z}{\sqrt{2}}.
\end{gather}
The SOS decomposition of CHSH-Bell inequality is
\begin{gather}	\gamma=\frac{1}{\sqrt{2}}[(I_A\otimes\frac{B_0+B_1}{\sqrt{2}}-A_0\otimes I_B)^2\\
	+(I_A\otimes\frac{B_0-B_1}{\sqrt{2}}-A_1\otimes I_B)^2].\notag
\end{gather}
	
	\subsubsection{Generalized CHSH-Bell inequality}
	A  more complex example named generalized CHSH-Bell operator is discussed below, which is expressed as \cite{ref26}
	\begin{gather}
		\mathscr{B}_\alpha=\alpha A_0\otimes B_0+\alpha A_0\otimes B_1+A_1\otimes B_0-A_1\otimes B_1 \label{G1}.
	\end{gather}
	Firstly, we calculate the sum of squares of the coefficients of the generalized CHSH-Bell inequality
	\begin{align}\label{B1}
	\omega_{y}=\sqrt{\sum_{x=0}^{1}\alpha_{xy}^2}=\sqrt{\alpha^2+1}\ (y=0,1) .
\end{align}
We set $|\psi\rangle=\frac{1}{\sqrt{2}}(|00\rangle+|11\rangle)$. From Eq.\eqref{B1} we have $\sum_{i<j}\alpha_{y,i}\alpha_{y,j}
\langle\{A_i,A_j\}\rangle = 0 $. So we can set $A_0=Z,\ A_1=X$. One computes
	\begin{gather}
		\omega_0=||(\alpha A_0+A_1)\otimes I_B|\psi\rangle||
		=\sqrt{\alpha^2+1} ,\notag\\
		\omega_1=||(\alpha A_0-A_1)\otimes I_B|\psi\rangle||
		=\sqrt{\alpha^2+1}\notag .
	\end{gather}
	So for the generalized CHSH-Bell inequality its SOS decomposition is as follows
	\begin{gather}
		\gamma=\frac{\sqrt{\alpha^2+1}}{2}[(\frac{\alpha A_0+A_1}{\sqrt{\alpha^2+1}}\otimes I_B-I_A\otimes B_0)^2\\
		+(\frac{\alpha A_0-A_1}{\sqrt{\alpha^2+1}}\otimes I_B-I_A\otimes B_1)^2] ,\notag
	\end{gather}
	and $B_0,B_1$ can be linearly represented by $A_0,A_1$ as
	\begin{gather}
		B_0=\frac{\alpha Z+X}{\sqrt{\alpha^2+1}}=\cos u Z +\sin u X  ,\notag \\
		B_1=\frac{\alpha Z-X}{\sqrt{\alpha^2+1}}=\cos u Z -\sin u X \label{G2} .
	\end{gather}
	This result is exactly the same as the one mentioned in the previous literature \cite{ref26}.
	\subsubsection{EBI inequality}
The so called Elegant Bell inequality is given in \cite{ref25}, with Bell operator
	\begin{align}
		S&=A_1\otimes B_1+A_2\otimes B_1+A_3\otimes B_1+A_1\otimes B_2\\\notag
		&-A_2\otimes B_2-A_3\otimes B_2-A_1\otimes B_3+A_2\otimes B_3\\\notag
		&-A_3\otimes B_3-A_1\otimes B_4-A_2\otimes B_4+A_3\otimes B_4 .
	\end{align}
	The significant increase in complexity compared to CHSH-Bell inequality and the generalize CHSH-Bell inequality is that the optional measurement operators for the Alice side are increased from two to three, and the optional measurement operators for the Bob side are increased from two to four. The same two coefficient sums of squares are calculated to be
	\begin{align}\label{L2}
		\omega_{y}=\sqrt{\sum_{x=1}^{3}\alpha_{xy}^2}=\sqrt{3}\ (y=1,2,3).
	\end{align}
We set $|\psi\rangle=\frac{1}{\sqrt{2}}(|00\rangle+|11\rangle)$. From Eq.\eqref{L2} it satisfies $\sum_{i<j}\alpha_{y,i}\alpha_{y,j}
\langle\{A_i,A_j\}\rangle = 0 $. So we can set $A_1=X,\ A_2=Y,\ A_3=Z$. One computes
	\begin{align}
		&\omega_1=||(A_1+A_2+A_3)\otimes I_B|\psi\rangle||=\sqrt{3},\notag\\
		&\omega_2=||(A_1-A_2-A_3)\otimes I_B|\psi\rangle||=\sqrt{3},\notag\\
		&\omega_3=||(-A_1+A_2-A_3)\otimes I_B|\psi\rangle||=\sqrt{3},\notag\\
		&\omega_4=||(-A_1-A_2+A_3)\otimes I_B|\psi\rangle||=\sqrt{3}\notag .
	\end{align}
	For the EBI inequality the SOS decomposition is
	\begin{align}
		\gamma&=\frac{\sqrt{3}}{2}[(\frac{A_1+A_2+A_3}{\sqrt{3}}\otimes I_B-I_A \otimes B_1)^2 \notag\\
&+(\frac{A_1-A_2-A_3}{\sqrt{3}}\otimes I_B-I_A \otimes B_2)^2\notag\\
		&+(\frac{-A_1+A_2-A_3}{\sqrt{3}}\otimes I_B-I_A \otimes B_3)^2\notag\\
		&+(\frac{-A_1-A_2+A_3}{\sqrt{3}}\otimes I_B-I_A \otimes B_4)^2] .
	\end{align}
And the measurement operators corresponding to the  maximal violation of CHSH Bell inequality are
\begin{align}
	A_1&=X,\ A_2=Y,\ A_3=Z, \\
	B_1&=\frac{X+Y+Z}{\sqrt{3}},\ B_2 =\frac{X-Y-Z}{\sqrt{3}},\\
	B_3&=\frac{-X+Y-Z}{\sqrt{3}},\ B_4 = \frac{-X-Y+Z}{\sqrt{3}}.
\end{align}
	\subsubsection{Gisin Bell inequality}
	Let $ \mathcal{G}_n $ denotes the Bell operator for the Gisin Bell inequality\cite{ref27},
	\begin{align}
		\mathcal{G}_n = \sum_{i=1}^{n}(\sum_{j=1}^{n+1-i}A_i\otimes B_j -\sum_{j=n+2-i}^{n}A_i\otimes B_j ),
	\end{align}
	where $A_i$ and $B_j$ are the Hermitian operators with eigenvalues $ \pm 1 $ acting on Hilbert space $\mathcal{H}$. In particular, it has been shown in \cite{ref27} that the classical bound for LHV models and the maximum quantum violation of $\mathcal{G}_n$ amount to
	\begin{align}
		&\mathcal{G}_n^{LHV}=[\frac{n^2+1}{2}],\\
		&\mathcal{G}_n^Q=2n\cos(\frac{\pi}{2n})/\sin(\frac{\pi}{n}),
	\end{align}
	where $[x]$ denotes the largest integer smaller or equal to $x$ and $\mathcal{G}_n^Q$ is realized with the maximally entangled state of two qubits
	\begin{align}
		|\phi \rangle =\frac{1}{\sqrt{2}}(|00\rangle + |11\rangle).
	\end{align}
	Now, let's set $n=3$ as an example of a SOS decomposition. One gets
	\begin{align}
		\mathcal{G}_3 &= (A_1+A_2+A_3)\otimes B_1\notag\\
		&+(A_1+A_2-A_3)\otimes B_2\notag\\
		&+(A_1-A_2-A_3)\otimes B_3,
	\end{align}
	and the maximum quantum violation of $\mathcal{G}_3$ is
	\begin{align}
		\mathcal{G}_3 = \frac{6\cos(\frac{\pi}{6})}{\sin(\frac{\pi}{3})}=6.
	\end{align}
	From Eq.\eqref{A111} and \eqref{A6} we can get
	\begin{align*}
	\omega_1^2 =&||(A_1+A_2+A_3)\otimes I_B|\phi\rangle||^2,\\
	\omega_2^2 =&||(A_1+A_2-A_3)\otimes I_B|\phi\rangle||^2,\\
	\omega_3^2 =&||(A_1-A_2-A_3)\otimes I_B|\phi\rangle||^2,\\
	&\omega_1^2=\omega_2^2=\omega_3^2,
	\end{align*}
	which meets $\langle\{A_1,A_2\}\rangle = \langle\{A_2,A_3\}\rangle=-\langle\{A_1,A_3\}\rangle$. Since
	\begin{align}
		\sum_{y=1}^{3}\omega_y &= 6,\\
		\omega_1=\omega_2 &=\omega_3=2,
	\end{align}
	without loss of generality we make
	\begin{align}
		&\{A_1,A_2\} = \{A_2,A_3\}=-\{A_1,A_3\}=I.\label{EQ2}
	\end{align}
	In order to find the measurement operators for Alice side that satisfy the condition, we set the following measurements of linear combination of Pauli matrices for Alice
	\begin{align}
		A_i=r_i \cdot \sigma,\label{EQ1}
	\end{align}
	where
	\begin{align}
		r_i = (a_i,b_i,c_i),\\
		\sigma = (X,Y,Z).
	\end{align}
	From Eq.\eqref{EQ1} we get $\{A_i ,A_j\}= 2r_i \cdot r_j I $. Then one has
	\begin{align}
		r_1 \cdot r_2 &= \frac{1}{2},\\
		r_2 \cdot r_3 &= \frac{1}{2},\\
		r_1 \cdot r_3 &=-\frac{1}{2}.
	\end{align}
	This suggests that $ \langle r_1,r_2\rangle= \langle r_2,r_3\rangle=\frac{\pi}{3},\langle r_1,r_3\rangle=\frac{2\pi}{3}$, where $\langle a,b\rangle$ means the angle between the vectors $a$ and $b$.
	That is, as long as the above conditions are satisfied we can obtain the corresponding SOS decomposition. Without loss of generality, we pick the following set of angles
	
	\begin{align}
		r_1 &= (\sin\frac{\pi}{3},0,\cos\frac{\pi}{3}),\\
		r_2 &= (\sin\frac{2\pi}{3},0,\cos\frac{2\pi}{3}),\\
		r_3 &= (\sin\pi,0,\cos\pi).
	\end{align}
	The measurement operators can be set as
	\begin{align}
		A_1 &= \frac{\sqrt{3}}{2}X+\frac{1}{2}Z,\\
		A_2 &=  \frac{\sqrt{3}}{2}X-\frac{1}{2}Z,\\
		A_3 &= -Z.
	\end{align}
	And the SOS decomposition of the $\mathcal{G}_3$ is
	\begin{align}
		\gamma &=(\frac{A_1+A_2+A_3}{2}\otimes I_B-I_A\otimes B_1)^2\notag\\
		&+(\frac{A_1+A_2-A_3}{2}\otimes I_B-I_A\otimes B_2)^2\notag\\
		&+(\frac{A_1-A_2-A_3}{2}\otimes I_B-I_A\otimes B_3)^2.
	\end{align}
	from which $B_1,B_2,B_3$ can be linearly represented by $A_1,A_2,A_3$
	\begin{align}
		B_1&=\frac{A_1+A_2+A_3}{2}=-\frac{\sqrt{3}}{2}X+\frac{1}{2}Z,\\
		B_2&=\frac{A_1+A_2-A_3}{2}= -\frac{\sqrt{3}}{2}X-\frac{1}{2}Z,\\
		B_3&=\frac{A_1-A_2-A_3}{2}=-Z.
	\end{align}
	This result is consistent with that in \cite{ref27}.
	\subsubsection{chained Bell inequality}
	Another example is chained Bell inequality\cite{ref28}. Let $\mathcal{C}_n$ denotes the Bell operator for the chained Bell inequality with $n$ measurements on each party
	\begin{align}
		\mathcal{C}_n = \sum_{n-1}^{k}A_k\otimes B_k+A_{k+1}\otimes B_k+A_n\otimes B_n-A_1\otimes B_n.
	\end{align}
	The classical bound for LHV models and the maximum quantum value of $\mathcal{C}_n$ are given, respectively, by the author of Ref.\cite{ref28}
	\begin{align}
		&\mathcal{C}_n^{LHV}=2n-2,\\
		&\mathcal{C}_n^Q= 2n\cos\frac{\pi}{2n}.
	\end{align}
	We can compute
	\begin{align*}
	\omega_k^2=&||(A_k+A_{k+1})\otimes I_B|\psi\rangle||^2,\\
	\omega_n^2=&||(A_n-A_1)\otimes I_B|\psi\rangle||^2,\\
	\omega_k^2=&\omega_n^2,
	\end{align*}
	in which $|\psi\rangle=\frac{1}{\sqrt{2}}(|00\rangle+|11\rangle)$ and $k = 1,2\dots n-1$. We can obtain $\langle\{A_k,A_{k+1}\}\rangle =-\langle\{A_1,A_n\}\rangle$. Since
	\begin{align*}
		\sum_{y=1}^{n}\omega_y&= 2n\cos\frac{\pi}{2n},\\
		\omega_1= \omega_2&=\dots=\omega_n=2\cos\frac{\pi}{2n},
	\end{align*}
	without loss of generality we set
	\begin{align}
		\{A_k,A_{k+1}\} &=-\{A_1,A_n\}=2\cos\frac{\pi}{n}I,\label{EQ7}\\
		\omega_k^2=&\ \omega_n^2=2+2\cos\frac{\pi}{n}\notag.
	\end{align}
	In order to find the measurement operators for Alice side that satisfy the condition, we set the following measurements of linear combination of Pauli matrices for Alice
	\begin{align}
		A_i=r_i \cdot \sigma\label{EQ8},
	\end{align}
	where
	\begin{align}
		r_i = (a_i,b_i,c_i),\\
		\sigma = (X,Y,Z).
	\end{align}
	From Eq.\eqref{EQ8}, we have $\{A_i ,A_j\}= 2r_i \cdot r_j I $. Together with Eq.\eqref{EQ7}, one gets
	\begin{align}
		r_k \cdot r_{k+1} &= \cos\frac{\pi}{n},\\
		r_1 \cdot r_n &=\cos\frac{(k-1)\pi}{n},\notag\\
		&=-\cos\frac{\pi}{n}.
	\end{align}
	This suggests that $ \langle r_k,r_{k+1} \rangle =\frac{\pi}{n} ,\langle r_1,r_n \rangle=\frac{(n-1)\pi}{n}$.
	That is, as long as the above conditions are satisfied we can obtain the corresponding SOS decomposition. Without loss of generality, we pick the following set of angles
	\begin{align}
		r_1 &= (\sin0,0,\cos0),\\
		r_i &= (\sin\frac{(i-1)\pi}{n},0,\cos\frac{(i-1)\pi}{n}),\\
	\end{align}
	in which $i=1,2 \dots n$. The measurement operator is denoted by
	\begin{align}
		A_i=\sin\frac{(i-1)\pi}{n}X+\frac{(i-1)\pi}{n}Z.
	\end{align}
	And the SOS decomposition of $\mathcal{C}_n$ is
	\begin{align*}
		\gamma =& \cos\frac{\pi}{2n}[\sum_{k=1}^{n-1}(\frac{A_k+A_{k+1}}{2\cos\frac{\pi}{2n}}\otimes I_B-I_A \otimes B_k)^2\notag\\
+&(\frac{A_n-A_1}{2\cos\frac{\pi}{2n}}\otimes I_B-I_A \otimes B_n)^2],
	\end{align*}
	from which $B_j$ can be linearly represented by $A_i$ as follows
	\begin{align*}
		B_k=&\frac{A_k+A_{k+1}}{2\cos\frac{\pi}{2n}}\notag\\
		=&\sin\frac{(2k-1)\pi}{2n}X+\cos\frac{(2k-1)\pi}{2n}Z, 1\le k \le n-1,\\
		B_n=&\frac{A_n-A_1}{2\cos\frac{\pi}{2n}}\notag\\
		=&\sin\frac{(2n-1)\pi}{2n}X+\cos\frac{(2n-1)\pi}{2n}Z.
	\end{align*}
	Combining the above equations we obtain for any $1\le k \le n$
	\begin{align}
		B_k=&\frac{A_k+A_{k+1}}{2\cos\frac{\pi}{2n}}\notag\\
		=&\sin\frac{(2k-1)\pi}{2n}X+\cos\frac{(2k-1)\pi}{2n}Z.
	\end{align}
	These results are consistent with that in \cite{ref29}.
	\section{Generation of certified randomness from the generalized CHSH-bell inequality}
	As mentioned earlier, there is a set of quantum relations that are satisfied in quantum theory: $\langle\mathscr{B}\rangle_L<\langle\mathscr{B}\rangle_Q \le \langle\mathscr{B}\rangle_{\max}$. Here throughout this document, $\langle\mathscr{B}\rangle_L ,\ \langle\mathscr{B}\rangle_Q ,\ \langle\mathscr{B}\rangle_{\max}$ denote the classical maximum, the actual value, and the quantum maximum violated value of a particular Bell inequality. This shows that quantum correlations violating Bell inequality cannot be reproduced by prearranged classical strategy, which demonstrates the inherent randomness of the observed statistics. Taking the generalized CHSH-Bell inequality as an example, in conjunction with our previously proposed SOS decomposition, we investigate the relation between the randomness of a particular state generation and variable $\alpha$.\\\indent
	Multiple methods exist in quantum information for calculating entropy, such as the Shannon-entropy and von Neumann-entropy. However, the quantitative research on randomness mainly uses the minimum entropy \cite{ref26,ref30}, which we also use in this paper to compare with the previous research. We know that the minimum entropy is determined only by the maximum probability of incidence among all events, quantifying the minimum predictability among all probability distributions. Therefore, we believe that the minimum entropy provides the safest bound for the generation of randomness, which we now refer to as the guaranteed randomness bound. Below for a given Bell violation $\langle\mathscr{B}\rangle_L < \langle\mathscr{B}\rangle_Q $ we quantify the magnitude of the device-independent generated randomness $R_{\min}$ by means of minimal entropy \cite{ref31}\cite{ref32}
	\begin{align}
		R_{\min}=&\min\limits_{\vec{P}_{obs}}H_\infty(a,b|A_x,B_y)\notag\\
		=&-\log_2[\max\limits_{\vec{P}_{obs}}p(a,b|A_x,B_y)],\notag\\
		\rm{s.t.}\
		(&\rm{i})\ \vec{P}_{obs}\subset\{p(a,b|A_x,B_y)\},\notag\\
		(&\rm{ii})\ p(a,b|A_x,B_y)=Tr[\rho_{(AB)}(\Pi^{a}_{A_x}\otimes\Pi^b_{B_y})],\notag\\
		(&\rm{iii})\ \langle\mathscr{B}\rangle_Q=\sum_{x=1}^{n}\sum_{y=1}^{m}\langle\alpha_{xy}A_x\otimes B_y\rangle, \notag\\
		&\qquad \langle\mathscr{B}\rangle_L < \langle\mathscr{B}\rangle_Q \le \langle\mathscr{B}\rangle_{\max}  ,
	\end{align}
	in which $\vec{P}_{obs}\subset R^{mn}$ denotes a vector in a real vector space of dimension $nm$ that represents the set of all observed probabilistic relations, and $\Pi^{a}_{A_x}$ denotes the projection operator for the eigenvector corresponding to the eigenvalue of the measurement operator $A_x$ that has eigenvalue $a$.
	From the Eq.\eqref{G1}, we obtain the SOS decomposition as
	\begin{gather}
		\gamma=\frac{\sqrt{\alpha^2+1}}{2}[(\frac{\alpha A_0+A_1}{\sqrt{\alpha^2+1}}-B_0)^2+(\frac{\alpha A_0-A_1}{\sqrt{\alpha^2+1}}-B_1)^2]  .
	\end{gather}
	When it reaches its maximum violation value $2\sqrt{\alpha^2+1}$, the measurement operators of Alice's side are anti-commutation. Without loss of generality, we choose $A_0=Z,A_1=X$, and $B_0,B_1$ can be linearly represented by $A_0,A_1$, which are shown in Eq.\eqref{G2}.
	At the reach of the maximum violation, its quantum state is the maximally entangled state $|\psi\rangle=\frac{1}{\sqrt{2}}(|00\rangle+|11\rangle)$, so we first discuss the randomness generation with respect to the variable $\alpha$ for the maximally entangled state.
	\subsection{Maximally entangled state}
	First of all the decomposition can be solved to obtain its corresponding eigenvectors with eigenvalues of $1,\ -1$ respectively as
	\begin{gather}
		A_0=Z=\begin{pmatrix}
			1&0\\
			0&-1
		\end{pmatrix},
		A_1=X=\begin{pmatrix}
			0&1\\
			1&0
		\end{pmatrix};\notag\\
		B_0=\cos u Z+\sin u X =\begin{pmatrix}
			\cos u&\sin u \\
			\sin u & -\cos u
		\end{pmatrix},\notag\\
		B_1=\cos u Z -\sin u X=\begin{pmatrix}
			\cos u & -\sin u\\
			-\sin  u &-\cos u
		\end{pmatrix}.\notag
	\end{gather}
Let $a=\cos u -1,b=\sin u $. The projection operators corresponding to the measurement operators for eigenvalues $1,\ -1$ can be obtained
	\begin{gather}
		\Pi_{A_0}^1=\frac{1}{2}\begin{pmatrix}
			1&1\\
			1&1
		\end{pmatrix} ,\
		\Pi_{A_0}^{-1}=\frac{1}{2}\begin{pmatrix}
			1&-1\\
			1&-1
		\end{pmatrix} ;\notag\\
		\Pi_{A_1}^1=\begin{pmatrix}
			1&0\\
			0&0
		\end{pmatrix} ,\
		\Pi_{A_1}^{-1}=\begin{pmatrix}
			0&0\\
			0&1
		\end{pmatrix} ;\notag\\
		\Pi_{B_0}^1=\frac{1}{a^2+b^2}\begin{pmatrix}
			b^2&-ab\\
			-ab&a^2
		\end{pmatrix} ,\
		\Pi_{B_0}^{-1}=\frac{1}{a^2+b^2}\begin{pmatrix}
			a^2&ab\\
			ab&b^2
		\end{pmatrix} ;\notag\\
		\Pi_{B_1}^1=\frac{1}{a^2+b^2}\begin{pmatrix}
			b^2&ab\\
			ab&a^2
		\end{pmatrix} ,\
		\Pi_{B_1}^{-1}=\frac{1}{a^2+b^2}\begin{pmatrix}
			a^2&-ab\\
			-ab&b^2
		\end{pmatrix}\notag .
	\end{gather}\\\indent
	In the next step one can calculate $p(i,j|A_x,B_y)=\rm{Tr}[\rho_{AB}(\Pi^{i}_{A_x}\otimes\Pi^j_{B_y})]$, where $\rho_{AB}=|\psi\rangle\langle\psi|;\ i,j\subset\{1,-1\};\ x,y\subset\{0,1\}$. There are 16 cases in total. The calculation can be found that there are duplicate terms in the these cases. It is required to solve the largest joint probability among the 16 joint probabilities, which can be translate to calculate the comparative analysis the magnitude value within the range of a, b
	\begin{align*}
		P_1(a,b)&=\frac{(a+b)^2}{4(a^2+b^2)}, P_2(a,b)=\frac{(a-b)^2}{4(a^2+b^2)},\\
		P_3(a,b)&=\frac{a^2}{4(a^2+b^2)},
		P_4(a,b)=\frac{b^2}{4(a^2+b^2)}\label{G3}  .
	\end{align*}
	Thus one can simplify the discussion to analyze the value of $T_1(a,b)=(a+b)^2,\ T_2(a,b)=(a-b)^2,\ T_3(a,b)=a^2,\ T_4(a,b)=b^2$. And $a =\cos u -1=\frac{\alpha-1}{\sqrt{\alpha^2+1}}\le 0,\ b=\sin u=\frac{1}{\sqrt{\alpha^2 +1}}>0$ can be attained for $u\in(0,\frac{\pi}{4}]$ when $\alpha \ge 1$, which gives $T_1(a,b)\le T_2(a,b)$. Then we compare $T_3(a,b),\ T_4(a,b)$ by
	\begin{gather}
		T_3(a,b)-T_4(a,b)=a^2-b^2=2\cos u(\cos u - 1)  .
	\end{gather}
	For $u \in (0,\frac{\pi}{4}] $, we can get $T_3(a,b) < T_4(a,b)$. Below we compare the relationship between  $T_2(a,b),\ T_4(a,b)$ as follows
	\begin{gather}
		T_2(a,b)-T_4(a,b)=(\cos u -1)(\cos u -\sin u)\label{B2} .
	\end{gather}
	Observing Eq.\eqref{B2}, we find that when $u\in (0,\frac{\pi}{4}],\ T_2(a,b)\le T_4(a,b)$.
	We substitute $a=\cos u -1,\ b=\sin u$ to Eq.\eqref{G3}
	\begin{gather}
		T_3(a,b)=\frac{b^2}{2(a^2+b^2)}=\frac{1+\cos u}{4} .
	\end{gather}
	Then, for the generalized CHSH-Bell inequality, when the fixed quantum state is the maximally entangled state and the measurement operator is the one corresponding to reaching the maximum violation value, the magnitude of the generated randomness with respect to the variable $\alpha$ can be expressed as a function of the following
	\begin{gather}
		\max\limits_{\vec{P}_{obs}}p(a,b|A_x,B_y)=\frac{1+\cos u}{4} ,
	\end{gather}
	where $\cos u=\frac{\alpha}{\sqrt{\alpha^2+1}}$, and
	at this point, $R_{\min}=-\log_2(\frac{1+\cos u}{4})$. It is easy to get that when $u=\frac{\pi}{4}$, $P_{\max}=\frac{1}{4}(1+\frac{1}{\sqrt{2}}), \ R_{\min}=-\log_ 2[\frac{1}{4}(1+\frac{1}{\sqrt{2}})]=
	1.2284$ bits, which corresponds to $\alpha=1$, i.e. the CHSH-Bell inequality. The result is the same as that in the article \cite{ref6}. And we can find when
$u\to 0, P_{\max}\to\frac{1}{2}, R_{\min}\to-\log_2[\frac{1}{2}]\to1 \rm{bits},$
i.e. the randomness obtained by the measurements of reaching the maximum violation of generalized CHSH-Bell inequality is minimal. In addition, the variation of randomness with respect to $\alpha$ is shown in FIG. \ref{P1}.
	\subsection{Werner state}
	Next we consider the Werner state, which is defined as follows
\begin{align}
	\rho(p)=p|\psi_-\rangle\langle\psi_-|+(1-p)\frac{I}{4},
\end{align}
	where $|\psi_-\rangle=\frac{1}{\sqrt{2}}(|01\rangle-|10\rangle), 0\le p \le 1 $. We compute $p(i,j|A_x,B_y)=tr[\rho(p)(\ \Pi^{i}_{A_x}\otimes\Pi^j_{B_y})]$. For a specific Bell inequality, we can note that the set of measurement operators for a pure state $|\psi\rangle$ and its corresponding Werner state $p|\psi_-\rangle\langle\psi_-|+(1-p)\frac{I}{4}$ to reach the maximum violation are the same. So in this section we have selected measurements that are also the same as $|\psi\rangle$. 16 outcomes can be obtained, most of which are repetitive. Thus we just need consider the following four joint probabilities
	\begin{gather}
		P_1(a,b)=\frac{1}{4}-\frac{pab}{2(a^2+b^2)},P_2(a,b)=\frac{1}{4}+\frac{pab}{2(a^2+b^2)},\notag\\ P_3(a,b)=\frac{1}{4}+\frac{p(a^2-b^2)}{4(a^2+b^2)},P_4(a,b)=\frac{1}{4}+\frac{p(b^2-a^2)}{2(a^2+b^2)} \label{Q1}.
	\end{gather}
	Since $ab<0$, we get
	$$P_1(a,b)=\frac{1}{4}-\frac{pab}{2(a^2+b^2)}>P_2(a,b)=\frac{1}{4}+\frac{pab}{2(a^2+b^2)},$$
    $$P_3(a,b)=\frac{1}{4}+\frac{p(a^2-b^2)}{4(a^2+b^2)}\le P_4(a,b)=\frac{1}{4}+\frac{p(b^2-a^2)}{2(a^2+b^2)}.$$ 
    Then we can get $P_1( a,b)=\frac{1}{4}-\frac{pab}{2(a^2+b^2)}<P_4(a,b)=\frac{1}{4}+\frac{p(b^2-a^2)}{2(a^2+b^2)}.$
	Combining the above analysis and substituting $a=\cos u-1,b=\sin u$ into Eq.\eqref{Q1}, we get
	\begin{gather}
		\max\limits_{\vec{P}_{obs}}P(a,b|A_x,B_y)=
		\frac{1+p\cos u}{4}	,
	\end{gather}
	where $\cos u=\frac{\alpha}{\sqrt{\alpha^2+1}}$, and $R_{\min}=-\log_2(\frac{1+p\cos u}{4}),\ u\in(0,\frac{\pi}{4}],\ p\in(\frac{1}{\sqrt{2}},1]$. Through analysis we find that if we fix $\alpha$, the randomness decreases with the increase of p. On the contrary, if we fix $p$, the randomness decreases with the increase of $\alpha$. We fix $\alpha = 1$ and compare with the previous results. We plot the variable $p$ against the image of the maximum joint probability as in FIG. \ref{P2}. The randomness is found to reach a minimum value of 1.2884 bits for $p=1$, which is the same as that in \cite{ref6}.
	
	In \cite{ref11}, the complete measurement statistics are considered to represent randomness. And SDP is sketched  by running through all the quantum states and measurements that can reach the maximal violation to obtain the maximum value of the probability. We now fix the quantum states and measurements corresponding to the maximum violation. In \cite{ref11} the authors give an upper bound of the maximum joint probability, while here we give a lower bound, as shown and compared in FIG. \ref{P2}. And we can observe that when the value of $p$ is close to 1,  the upper and lower bounds meet.
	\begin{figure}[t]
		\centering
		{\includegraphics[width=0.5\textwidth]{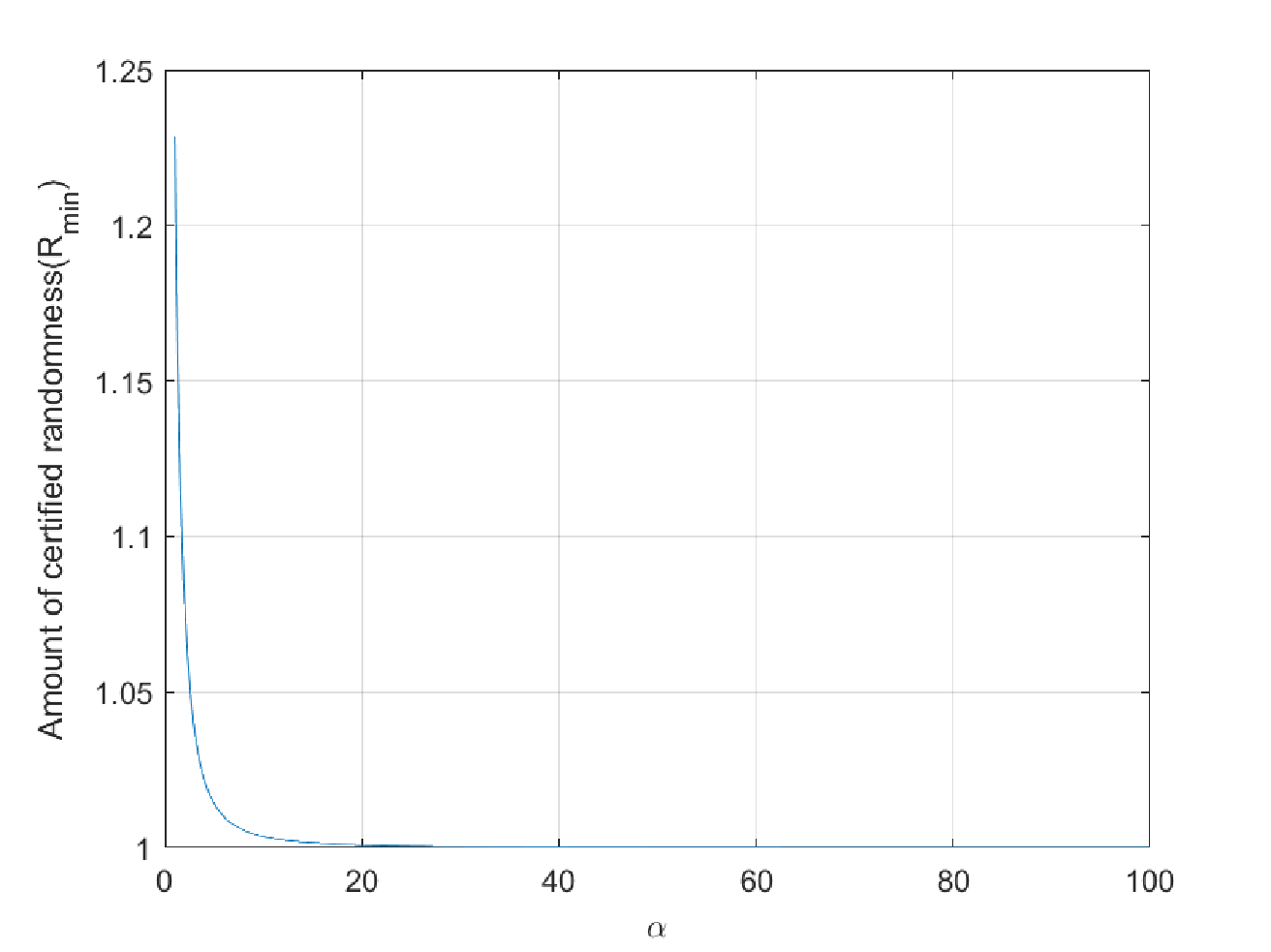}}\\
		\caption{The amount of certified randomness $ R_{\min}$ as a function of the visibility $\alpha$ for optimally violating generalized CHSH-Bell inequality.}
		\label{P1}
	\end{figure}
	\begin{figure}
		{\includegraphics[width=0.5\textwidth]{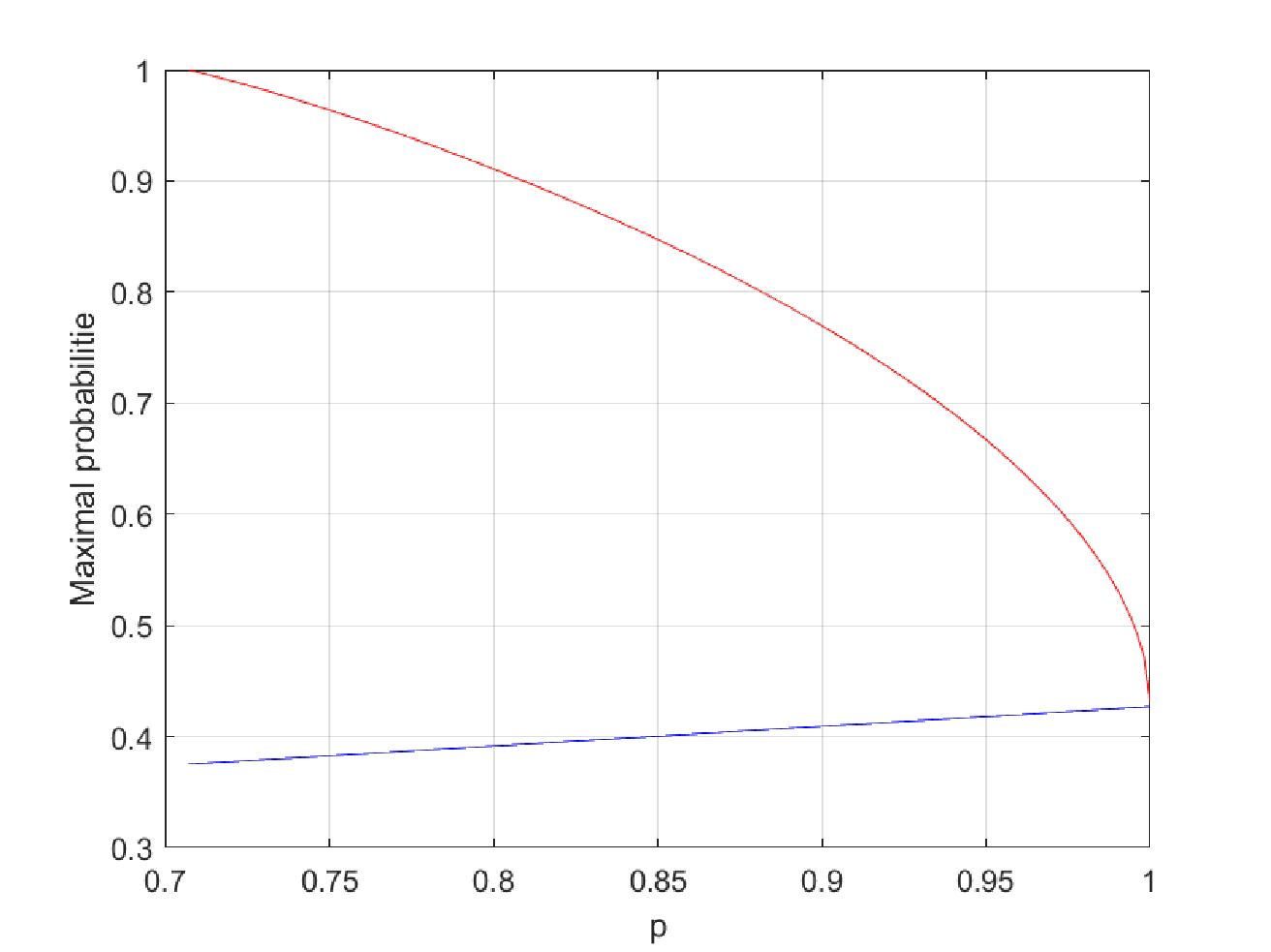}}\\
		\caption{Consider Werner state and the CHSH-Bell inequality. The maximal probability as a function of visibility $p$ for optimally violating CHSH correlations in present of white noise. The red line is the upper bound of the maximum joint probability obtained with the SDP program, while the blue line in the figure is a lower bound of the maximum joint probability obtained in this paper with the SOS decomposition. We find that the upper and lower bounds are close when $p$ is approaching to 1, and overlap when $p = 1$, with the corresponding maximum joint probability is $P_{\max}=\frac{1}{4}(1+\frac{1}{\sqrt{2}})$. The corresponding state is the maximally entangled state, which produces $R_{\min}$=1.2284 bits randomness.}
		\label{P2}
	\end{figure}
	
	\section{summary and outlook}
	In this paper, we present a kind of SOS decomposition form for general Bell inequalities and derive the measurement operators associated with the maximal violations of considered Bell inequalities. We also practice the SOS decomposition method by considering several examples. The corresponding SOS decomposition and the measurement operators that bring about the maximum violation values of these Bell inequalities are derived, which are consistent with previous results. In this paper we only consider maximally entangled quantum states to obtain the optimal measurement operator, but for general quantum states our approach is equally applicable. In particular, for the SOS decomposition of the generalized CHSH-Bell inequality that we obtain, we quantify the amount of randomness by the minimal entropy. We compute the randomness corresponding to the quantum optimal violation of the generalized CHSH-Bell inequality when the quantum states are the two-qubit maximally entangled state and the Werner state, respectively. 	
	For the maximally entangled state, we obtain a functional representation of the maximum guess probability with respect to the variable $\alpha$ as a function of $P_{\max}$ when the generalized CHSH-Bell inequality reaches the optimal violation. For Werner state, we obtain the maximum guessing probability as a function of the variable $\alpha$ and the white noise quantization value $p$ for the same Bell inequality violated maximally. In particular, we analyze the degeneration of the corresponding Bell inequality to CHSH-Bell inequality when $\alpha =1$. We obtain an upper bound on the maximum guessing probability using SDP and a lower bound on the maximum guessing probability using the SOS decomposition constructed in this paper, respectively. We find that the upper and lower bounds are close when $p$ is approaching to 1, which provides an effective estimation of the value of maximum guessing probability. 
	
	For further investigation, one can improve the lower bound of maximum guessing probability by performing local unitary operations on the measurements derived by SOS decomposition. The SOS decomposition method can be also discussed for multi-partite qubit systems.
	
\section*{Acknowledgments}
This work is supported by the Shandong Provincial Natural Science Foundation for Quantum Science No. ZR2021LLZ002 and the Fundamental Research Funds for the Central Universities No. 22CX03005A.

\section*{\bf Data availability statement}
No data was used for the research described in the article.

\end{document}